\DeclareRobustCommand\onedot{\futurelet\@let@token\@onedot}
\def\@onedot{\ifx\@let@token.\else.\null\fi\xspace}
\def\ie{\emph{i.e}\onedot}
\DeclarePairedDelimiter{\ceil}{\lceil}{\rceil}
\DeclarePairedDelimiter\floor{\lfloor}{\rfloor}
\newtheorem{theorem}{Theorem}
\newtheorem{lemma}{Lemma}
\newtheorem{definition}{Definition}
\title{Communication-Efficient Weighted Sampling and Quantile Summary for GBDT}
\author{
  Ziyue Huang, Ke Yi\\
  \texttt{zhuangbq@cse.ust.hk, yike@cse.ust.hk} \\
  Department of Computer Science\\
  Hong Kong University of Science and Technology
}
\begin{document}

\maketitle

\begin{abstract}
Gradient boosting decision tree (GBDT) is a powerful and widely-used machine learning model, which has achieved state-of-the-art performance in many academic areas and production environment. However, communication overhead is the main bottleneck in distributed training which can handle the massive data nowadays. In this paper, we propose two novel communication-efficient methods over distributed dataset to mitigate this problem, a weighted sampling approach by which we can estimate the information gain over a small subset efficiently, and distributed protocols for weighted quantile problem used in approximate tree learning.
\end{abstract}

\section{Introduction}

Gradient boosting decision tree (GBDT) \cite{friedman2001greedy} is a powerful and widely-used machine learning model, which has been highly successful in many applications. Traditional GBDT training requires scanning the whole dataset to find the best split point for each feature with largest information gain, which is the major bottleneck in the training process. To improve the efficiency and scalability, two popular approaches are down sampling the dataset and approximated tree learning using quantile strategy. 

Due to the surge of big data, distributed training has received a lot of attention recently, since it is an efficient way to handle large dataset by leveraging the computational power of multiple machines. However, its scalability is limited by the possibly overwhelming communication cost. In this paper, we propose two novel communication-efficient methods over distributed dataset to mitigate this problem.

\begin{itemize}
    \item Weighted sampling, by which we can sample a subset, where each data instance is included with probability proportional to the magnitude of its gradient, to estimate information gain of a given candidate split point, instead of scanning the whole dataset.
    \item A one-round protocol and a tree shape all-reduce protocol for weighted quantile problem used in approximate tree learning.
\end{itemize}

\section{Related Work}

To reduce the number of data instances used in the training process, GOSS \cite{ke2017lightgbm} estimates the information gain of each candidate split point over a small subset, which is composed of a small portion of data instances with larger gradients and a uniform sample from the rest data instances.

For approximate tree learning, Chen {\em et al.\/} \cite {chen2016xgboost} choose the best split point over a small candidate set, where the candidate split points are evenly distributed on the entire dataset according to the second order gradient statistics. They proposed a deterministic $\varepsilon$-approximate weighted quantile summary, which extends GK summary \cite{greenwald2001space} with merge and prune operations on weighted items as a basic building block and follows the multi-level summary structure from \cite{zhang2007fast}. Ivkin {\em et al.\/} proposed another randomized weighted quantile algorithm \cite{ivkin2019streaming}, which extends KLL sketch \cite{karnin2016optimal} to handle weighted items.

\section{Weighted Sampling}

GBDT learns a set of decision trees iteratively that fit residual error from the previous trees. The construction of a decision tree starts from the root node, and iteratively splits each internal node at the splitting point with largest information gain. Let $\{x_1, x_2, \ldots, x_n \}$ denote the training set, and $\{g_1, g_2, \ldots, g_n \}$ denote the gradients of each data instance in each iteration. Following \cite{ke2017lightgbm}, we use \textit{variance gain} to measure the information gain of a given candidate splitting point, which is defined as below.

\begin{definition} \label{definition:var_gain}
\cite{ke2017lightgbm} 
Let $\mathcal{O}$ be the training set on a fixed node of the decision tree.
The variance gain at splitting point $v$ for this node is defined as
$$
V(v) = \frac{1}{n} \left( \frac{ (\sum_{x_{i} \in \mathcal{O}^L(v)} g_i)^2 }{ \vert \mathcal{O}^L(v) \vert } + \frac{ (\sum_{x_{i} \in \mathcal{O}^R(v)} g_i)^2 }{ \vert \mathcal{O}^R(v) \vert } \right)
$$
where $n = \vert \mathcal{O} \vert$, $\mathcal{O}^L(v) = \{ x_{i} \in \mathcal{O} \mid x_{i} < v \}$, $\mathcal{O}^R(v) = \{ x_{i} \in \mathcal{O} \mid x_{i} > v \}$.
\end{definition}

In our proposed weighted sampling method, each data instance $x_i$ is included into the random sample $\mathcal{S}$ independently with probability $p_i = \min(1, s \vert g_i \vert / W)$, where $W = \sum_{i=1}^n \vert g_i \vert$. It is easy to see that the expected size of $\mathcal{S}$ is at most $s$.

The variance gain is estimated over $\mathcal{S}$ by
$$
\tilde{V}^{\text{WS}}(v) = \frac{1}{n} \left( \frac{ (\sum_{x_{i} \in \mathcal{S}^L(v)} g_i / p_i)^2 }{ \vert \mathcal{O}^L(v) \vert } + \frac{ (\sum_{x_{i} \in \mathcal{S}^R(v)} g_i / p_i)^2 }{ \vert \mathcal{O}^R(v) \vert } \right)
$$

\begin{lemma} \label{lemma:hoeffding-ineq}
(Hoeffding Inequality) 
Let $X_1, X_2, \ldots, X_n$ be independent random variables, such that $a_i \le X_i \le b_i$ for all $i$. Then for any $t > 0$, we have
$$
\mathsf{P}(\vert S_n - \mathsf{E}[S_n] \vert \ge t) \le 2 \exp \left(- \frac{2 t^2}{\sum_{i=1}^n (b_i - a_i)^2} \right)
$$
where $S_n = \sum_{i=1}^n X_i$.
\end{lemma}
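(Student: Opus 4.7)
The plan is to follow the classical Chernoff-bound route, since the Hoeffding inequality admits a clean proof via the moment-generating function (MGF) of the centered variables. First I would reduce to the centered case by defining $Y_i = X_i - \mathsf{E}[X_i]$, so that $a_i - \mathsf{E}[X_i] \le Y_i \le b_i - \mathsf{E}[X_i]$ and the interval length $b_i - a_i$ is preserved. It suffices to prove the one-sided bound $\mathsf{P}(S_n - \mathsf{E}[S_n] \ge t) \le \exp(-2t^2 / \sum (b_i - a_i)^2)$, since applying the same argument to $-X_i$ and union-bounding delivers the factor of $2$ in the lemma.

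Next I would apply Markov's inequality to $\exp(\lambda(S_n - \mathsf{E}[S_n]))$ for a parameter $\lambda > 0$, giving
\[
\mathsf{P}(S_n - \mathsf{E}[S_n] \ge t) \le e^{-\lambda t} \, \mathsf{E}\!\left[e^{\lambda \sum_i Y_i}\right] = e^{-\lambda t} \prod_{i=1}^n \mathsf{E}\!\left[e^{\lambda Y_i}\right],
\]
where the factorization uses independence of the $X_i$ (hence of the $Y_i$). The core technical step is then Hoeffding's lemma: for a mean-zero random variable $Y$ supported in $[\alpha,\beta]$,
\[
\mathsf{E}\!\left[e^{\lambda Y}\right] \le \exp\!\left(\tfrac{1}{8}\lambda^2 (\beta - \alpha)^2\right).
\]
To establish this, I would use convexity of $x \mapsto e^{\lambda x}$ to write $e^{\lambda y} \le \tfrac{\beta - y}{\beta - \alpha} e^{\lambda \alpha} + \tfrac{y - \alpha}{\beta - \alpha} e^{\lambda \beta}$, take expectations using $\mathsf{E}[Y]=0$, and then show that the resulting function $\varphi(\lambda) := \log\bigl(\tfrac{\beta}{\beta-\alpha} e^{\lambda \alpha} - \tfrac{\alpha}{\beta-\alpha} e^{\lambda \beta}\bigr)$ satisfies $\varphi(0)=0$, $\varphi'(0)=0$, and $\varphi''(\lambda) \le (\beta-\alpha)^2/4$ for all $\lambda$; a Taylor-with-remainder argument then gives $\varphi(\lambda) \le \lambda^2(\beta-\alpha)^2/8$.

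Plugging Hoeffding's lemma into the product and applying it to $Y_i$ on $[a_i - \mathsf{E}[X_i], b_i - \mathsf{E}[X_i]]$ yields $\mathsf{P}(S_n - \mathsf{E}[S_n] \ge t) \le \exp\!\bigl(-\lambda t + \tfrac{\lambda^2}{8}\sum_i (b_i-a_i)^2\bigr)$, and the final step is to optimize by choosing $\lambda = 4t/\sum_i (b_i - a_i)^2$, which produces the exponent $-2t^2/\sum_i (b_i-a_i)^2$. The main obstacle I expect is the bound $\varphi''(\lambda) \le (\beta-\alpha)^2/4$ inside Hoeffding's lemma: after differentiating, $\varphi''(\lambda)$ equals the variance of a two-point distribution on $\{\alpha,\beta\}$ with a $\lambda$-dependent tilted weight, and one must argue that this variance is maximized when the two outcomes are equally likely, yielding the quarter-length-squared bound. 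Everything else is routine algebra, so the heart of the proof is really this one scalar convex-analysis estimate.
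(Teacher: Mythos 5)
Your proposal is correct: it is the standard textbook proof of Hoeffding's inequality via the Chernoff/MGF route, and all the details check out --- the reduction to the centered one-sided case, the factorization by independence, Hoeffding's lemma with the $\varphi''(\lambda) \le (\beta-\alpha)^2/4$ variance bound, and the optimal choice $\lambda = 4t/\sum_i (b_i-a_i)^2$ which indeed yields the exponent $-2t^2/\sum_i(b_i-a_i)^2$. Note, however, that the paper gives no proof of this lemma at all: it is quoted as a classical result and used as a black box in Theorem~\ref{theorem:ws_err} and the flat-model protocol analysis, so there is no in-paper argument to compare against; your write-up would simply be supplying the standard proof the authors chose to omit.
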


\begin{theorem} \label{theorem:ws_err}
Let the approximation error of our weighted sampling be $\mathcal{E}^{\text{WS}}(v) = \vert \tilde{V}^{\text{WS}}(v) - V(v) \vert$. With probability at least $1 - \delta$,
$$
\mathcal{E}^{\text{WS}}(v) = O \left( \frac{W^2}{n s} \sqrt{\log (1/\delta)} \right)
$$
\end{theorem}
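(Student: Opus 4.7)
\textbf{Proof plan for Theorem \ref{theorem:ws_err}.}

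The plan is to reduce the problem to controlling the two ``gradient sums'' $\tilde G_L = \sum_{x_i\in\mathcal{S}^L(v)} g_i/p_i$ and $\tilde G_R = \sum_{x_i\in\mathcal{S}^R(v)} g_i/p_i$, and then to pass from concentration of $\tilde G_L, \tilde G_R$ to concentration of their squares. Let $n_L = |\mathcal{O}^L(v)|$, $G_L = \sum_{x_i\in\mathcal{O}^L(v)} g_i$, and similarly $n_R, G_R$; then $V(v) = (G_L^2/n_L + G_R^2/n_R)/n$ and $\tilde V^{\text{WS}}(v) = (\tilde G_L^2/n_L + \tilde G_R^2/n_R)/n$.

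First I would write $\tilde G_L = \sum_{x_i\in\mathcal{O}^L(v)} (g_i/p_i)\,Z_i$, where $Z_i\in\{0,1\}$ is the inclusion indicator with $\mathsf{E}[Z_i]=p_i$. These terms are independent and $\mathsf{E}[\tilde G_L] = G_L$. For the Hoeffding range, I would split the index set: if $p_i = 1$ then the term is deterministic and contributes nothing to $\tilde G_L - G_L$; otherwise $p_i = s|g_i|/W$, and the random summand $(g_i/p_i)Z_i$ lies in an interval of length $|g_i|/p_i = W/s$. Applying Lemma \ref{lemma:hoeffding-ineq} then yields, for any $t>0$,
$$
\mathsf{P}\bigl(|\tilde G_L - G_L| \geq t\bigr) \leq 2\exp\!\left(-\frac{2 t^2 s^2}{n_L W^2}\right),
$$
so with probability at least $1-\delta/2$ we have $|\tilde G_L - G_L| = O\!\bigl((W/s)\sqrt{n_L \log(1/\delta)}\bigr)$, and the analogous statement holds for $\tilde G_R$.

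Next I would pass from the gradient sums to their squares via the identity $a^2 - b^2 = (a-b)^2 + 2b(a-b)$, combined with the trivial bound $|G_L| \leq \sum_i |g_i| = W$ (and similarly $|G_R|\leq W$). This gives
$$
|\tilde G_L^2 - G_L^2| \;\leq\; 2W\,|\tilde G_L - G_L| + |\tilde G_L - G_L|^2,
$$
so after dividing by $n n_L$ and using $n_L \geq 1$, the left-hand contribution to $\mathcal{E}^{\text{WS}}(v)$ is at most $O\!\bigl(W^2\sqrt{\log(1/\delta)}/(ns)\bigr)$ from the linear term and a lower-order $O(W^2\log(1/\delta)/(ns^2))$ from the quadratic term. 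Handling the right-hand side identically and taking a union bound over the two deviation events gives the claimed bound.

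The main obstacle is purely the range bound in Hoeffding's inequality: the naive bound $|g_i|/p_i$ is not uniformly $O(W/s)$ when $p_i=1$ with $|g_i|\gg W/s$, so one must separate the deterministic indices (where the deviation is exactly zero) from the random ones (where $p_i < 1$ forces $|g_i|/p_i = W/s$). Once this is cleanly set up, the rest is Cauchy--Schwarz-style algebra and the $|G_L|\leq W$ estimate, and the lower-order quadratic error is absorbed into the stated $O(\cdot)$ bound under the standard assumption $s \gtrsim \sqrt{\log(1/\delta)}$.
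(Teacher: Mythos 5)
Your proposal is correct and follows essentially the same route as the paper's proof: an unbiased Horvitz--Thompson estimate of the left/right gradient sums, Hoeffding's inequality with per-term range $W/s$ (after discarding the deterministic $p_i=1$ indices, which the paper handles with a "WLOG $p_i<1$" remark), the identity $a^2-b^2=(a-b)^2+2b(a-b)$ together with $|G_L|,|G_R|\le W$, and a union bound over the two sides. The only difference is cosmetic: you make explicit the condition $s\gtrsim\sqrt{\log(1/\delta)}$ needed to absorb the quadratic error term $O(W^2\log(1/\delta)/(ns^2))$ into the stated bound, which the paper leaves implicit.
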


\begin{proof}
Let $\mathcal{Q}$ be any subset of $\mathcal{O}$. Define $w_{\mathcal{S}}(\mathcal{Q}) = \sum_{x_i \in \mathcal{\mathcal{Q}} \cap \mathcal{S}} g_i / p_i$, $w(\mathcal{Q}) = \sum_{x_i \in \mathcal{Q}} g_i$, $X_i = g_i / p_i I[x_i \in \mathcal{S}]$. It is easy to see that $w_{\mathcal{S}}(\mathcal{Q})$ is an unbiased estimator of $w(\mathcal{Q})$,
$$
\mathsf{E}[w_{\mathcal{S}}(\mathcal{Q})] = \mathsf{E} \left[  \sum_{x_i \in \mathcal{\mathcal{Q}}} X_i \right] = \sum_{x_i \in \mathcal{Q}} g_i = w(\mathcal{Q})
$$
Without loss of generality, we can assume that for each $x_i \in \mathcal{Q}$ the sampling probability $p_i < 1$ when analyzing the error $\vert w_{\mathcal{S}}(\mathcal{Q}) - w(\mathcal{Q}) \vert$.

Since the random variables $\{ X_i \}, 1 \le i \le n$ are independent and each is bounded in the range $[-W/s, W/s]$, by Hoeffding inequality,
$$
\mathsf{P}\left( \vert w_{\mathcal{S}}(\mathcal{Q}) -  w(\mathcal{Q}) \vert  \ge  \frac{2W}{s} \sqrt{\vert \mathcal{Q} \vert \log (2/\delta) } \right) \le 2 \exp \left(- \frac{\vert \mathcal{Q} \vert (2 W / s)^2 \log (2/\delta)}{\sum_{x_i \in \mathcal{Q}}(2 W / s)^2 } \right) = \delta
$$
Then with probability at least $1 - \delta$, we have
\begin{eqnarray*}
\left\vert \frac{w_{\mathcal{S}}^2(\mathcal{Q})}{\vert \mathcal{Q} \vert} -  \frac{w^2(\mathcal{Q})}{\vert \mathcal{Q} \vert} \right\vert & = & \left\vert \frac{ ( w(\mathcal{Q}) + w_{\mathcal{S}}(\mathcal{Q}) - w(\mathcal{Q}) )^2 }{\vert \mathcal{Q} \vert} -  \frac{w^2(\mathcal{Q})}{\vert \mathcal{Q} \vert} \right\vert \\
& \le & \frac{2 w(\mathcal{Q}) \cdot \vert w_{\mathcal{S}}(\mathcal{Q}) - w(\mathcal{Q}) \vert}{ \vert \mathcal{Q} \vert } + \frac{ ( w_{\mathcal{S}}(\mathcal{Q}) - w(\mathcal{Q}) )^2 }{ \vert \mathcal{Q} \vert } \\
& \le & \frac{2 W^2}{s} \sqrt{\log (2/\delta)} + \frac{W^2}{s^2} \log (2/\delta)
\end{eqnarray*}

By the inequality above and union bound, we have
\begin{eqnarray*}
\mathcal{E}^{\text{WS}}(v) & = & \frac{1}{n} \left\vert \frac{w_{\mathcal{S}}^2( \mathcal{O}^L(v) )}{\vert \mathcal{O}^L(v) \vert} -  \frac{w^2( \mathcal{O}^L(v) )}{\vert \mathcal{O}^L(v) \vert}
+ \frac{w_{\mathcal{S}}^2( \mathcal{O}^R(v) )}{\vert \mathcal{O}^R(v) \vert} -  \frac{w^2( \mathcal{O}^R(v) )}{\vert \mathcal{O}^R(v) \vert} \right\vert \\
& \le & \frac{1}{n} \left\vert \frac{w_{\mathcal{S}}^2( \mathcal{O}^L(v) )}{\vert \mathcal{O}^L(v) \vert} -  \frac{w^2( \mathcal{O}^L(v) )}{\vert \mathcal{O}^L(v) \vert} \right\vert
+ \frac{1}{n} \left\vert \frac{w_{\mathcal{S}}^2( \mathcal{O}^R(v) )}{\vert \mathcal{O}^R(v) \vert} -  \frac{w^2( \mathcal{O}^R(v) )}{\vert \mathcal{O}^R(v) \vert} \right\vert \\
& \le & \frac{4 W^2}{n s} \sqrt{\log (4/\delta)} + \frac{2 W^2}{n s^2} \log (4/\delta)
\end{eqnarray*}
\end{proof}

\textbf{Remark}. GOSS firstly sorts all the data instances according to the magnitude of their associated gradient and selects top $an$ instances, denoted as $A$, with larger gradients, then it uniformly samples a subset $B$ of size $bn$ from the remaining instances with smaller gradients, where $a$ and $b$ are the sampling ratio defined by the users. Let the approximation error of GOSS be $\mathcal{E}^{\text{GOSS}}(v) = \vert \tilde{V}^{\text{GOSS}}(v) - V(v) \vert$. The authors of \cite{ke2017lightgbm} prove that with probability at least $1 - \delta$,
$$
\mathcal{E}^{\text{GOSS}}(v) = O \left( C_{a, b}^2 \log (1/\delta) + 2 D(v) C_{a, b} \sqrt{\log (1/\delta) / n} \right)
$$
where $C_{a, b} = \frac{(1-a)}{\sqrt{b}} \max_{x_i \in A^c} \vert g_i \vert$, $D(v) = \max \left(  \frac{ \sum_{x_{i} \in \mathcal{O}^L(v)} \vert g_i \vert }{ \vert \mathcal{O}^L(v) \vert }, \frac{ \sum_{x_{i} \in \mathcal{O}^R(v)} \vert g_i \vert }{ \vert \mathcal{O}^R(v) \vert } \right)$.

Let $s = (a + b)n$ be the sample size, the upper bound of $\mathcal{E}^{\text{GOSS}}(v)$ becomes $\Omega(W^2 n / s^3 \log(1 / \delta))$ since $C_{a, b} = \Omega(\sqrt{n / s} \cdot W / s)$ and $D(v) = \Omega(W/s)$, which is asymptotically worse than our weighted sampling method. Moreover, GOSS requires selecting top-$an$ instances over the entire dataset, which will incur a large amount of communication if the dataset is distributed among multiple nodes, while our weighted sampling method only needs to synchronize the total weight $W$ of the entire dataset.

\section{Weighted Quantile Summary}

Formally, the weighted quantile problem can be described as follows. Given a dataset $\mathcal{D} = \{ (v_1, w_1), (v_2, w_2), \ldots, (v_n, w_n) \}$, where $v_1 \le v_2 \le \ldots \le v_n$, and $w_i > 0$ for $i = 1, 2, \ldots, n$. Each $v_i$ corresponds to a feature value of a data instance and $w_i$ is its weight.

Define the rank of $v$ by
$$
r_{\mathcal{D}}(v) = \sum_{i : v_i < v, (v_i, w_i) \in \mathcal{D}} w_i
$$
$$
r_{\mathcal{D}}^+(v) = \sum_{i : v_i \le v, (v_i, w_i) \in \mathcal{D}} w_i
$$
The goal is constructing a small summary to answer $\varepsilon$-approximate quantile query, \ie, to compute $\tilde{r}(v)$ for any given value $v$ such that
$$\vert \tilde{r}(v) - r_{\mathcal{D}}(v)  \vert \le \varepsilon w_{\mathcal{D}}$$
with probability $1 - \delta$, where $w_{\mathcal{D}} = \sum_{ (v_i, w_i) \in \mathcal{D} } w_i$.

\subsection{Basic Summary}

Our weighted quantile summary (bucketizer) of $\mathcal{D}$ is constructed as
$$\mathcal{T} = \{ (v_i, \tilde{w}_i)  \mid  (v_i, w_i) \in \mathcal{D}, \exists j \in \{ 0, 1, 2, \cdots, \ceil*{w_\mathcal{D}/t} \}, r_{\mathcal{D}}(v_i) \le b + j t < r_{\mathcal{D}}^+(v_i) \}$$
where $t > 0$ is a chosen parameter which we will decide later, $b$ is uniform in the range $(0, t)$, and the corresponding weight of $v_i$ is adjusted to
$$\tilde{w}_i = \left(\floor*{\frac{r^+_{\mathcal{D}}(v_i) - b}{t}} - \floor*{\frac{r_{\mathcal{D}}(v_i) - b}{t}} \right) \cdot t$$
The rank $r_{\mathcal{D}}(v)$ of any given value $v$ can be estimated using $\mathcal{T}$ by
$$
r_{\mathcal{T}}(v) = \sum_{i : v_i < v, (v_i, \tilde{w}_i) \in \mathcal{T}} \tilde{w}_i
$$

\begin{figure}
\centering
\begin{tikzpicture}[xscale=0.8]
\draw [thick] (0,0) -- (10,0);
\draw (0,0) -- (0, .1);
\draw (1,0) -- (1,.1);
\draw [fill] (1.2, 0) circle [radius=0.05];
\node[above] at (1.2, 0)%
{\scriptsize $b$};
\draw (2,0) -- (2,.1);
\draw [fill] (3.2, 0) circle [radius=0.05];
\node[above] at (3.2, 0)%
{\scriptsize $b+t$};
\draw [fill] (5.2, 0) circle [radius=0.05];
\node[above] at (5.2, 0)%
{\scriptsize $b+2t$};
\draw (5.5,0) -- (5.5,.1);
\draw (7,0) -- (7,.1);
\draw [fill] (7.2, 0) circle [radius=0.05];
\node[above] at (7.2, 0)%
{\scriptsize $b+3t$};
\draw (9,0) -- (9, .1);
\draw [fill] (9.2, 0) circle [radius=0.05];
\node[above] at (9.2, 0)%
{\scriptsize $b+4t$};
\draw (10,0) -- (10,.1);
\node[align=center, below] at (0.5,-.1)%
{$v_1$};
\node[align=center, below] at (1.5,-.1)%
{$v_2$};
\node[align=center, below] at (3.75,-.1)%
{$v_3$};
\node[align=center, below] at (6.25,-.1)%
{$v_4$};
\node[align=center, below] at (8,-.1)%
{$v_5$};
\node[align=center, below] at (9.5,-.1)%
{$v_6$};
\end{tikzpicture}

\begin{tikzpicture}[xscale=0.8]
\draw [thick] (0,0) -- (10,0);
\draw (0,0) -- (0, .1);
\draw (2,0) -- (2,.1);
\draw (6,0) -- (6,.1);
\draw (8,0) -- (8, .1);
\draw (10,0) -- (10,.1);
\node[align=center, below] at (1,-.1)%
{$v_2$};
\node[align=center, above] at (1,0)%
{\scriptsize $t$};
\node[align=center, below] at (4,-.1)%
{$v_3$};
\node[align=center, above] at (4,0)%
{\scriptsize $2t$};
\node[align=center, below] at (7,-.1)%
{$v_5$};
\node[align=center, above] at (7,0)%
{\scriptsize $t$};
\node[align=center, below] at (9,-.1)%
{$v_6$};
\node[align=center, above] at (9,0)%
{\scriptsize $t$};
\end{tikzpicture}

\caption{An example of bucketizer construction. $v_1, v_2, \ldots, v_6$ are ordered values, with their weight as the corresponding bucket's width.} \label{fig:bucketizer}
\end{figure}
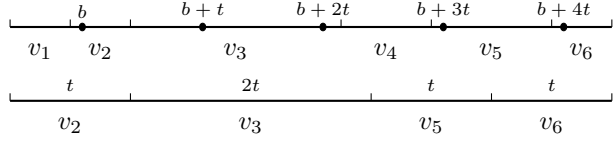

Obviously $\mathcal{T}$ contains at most $\ceil*{w_\mathcal{D}/t}$ items, and each item is represented by two records, while each item in the summary from \cite{chen2016xgboost} need four records.

\begin{lemma} \label{lemma:unbias-bucketizer}
For any given value $v$, $r_{\mathcal{T}}(v)$ is an unbiased estimator of $r_{\mathcal{D}}(v)$, with error bounded in the range $[-t, t]$.
\end{lemma}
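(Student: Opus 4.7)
The plan is to reinterpret $r_\mathcal{T}(v)$ as a simple counting quantity on a randomly shifted grid, and then show that this shifted-grid count is both an unbiased estimator of $r_\mathcal{D}(v)$ and deterministically close to it.

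First, I would unpack the definition of $\tilde w_i$. The quantity $\tilde w_i / t$ equals the number of grid points in the set $\mathcal{G} = \{\, b + jt : j \in \mathbb{Z}_{\ge 0} \,\}$ that lie inside the half-open interval $I_i = [r_\mathcal{D}(v_i), r^+_\mathcal{D}(v_i))$, because $\lfloor (r^+_\mathcal{D}(v_i)-b)/t\rfloor - \lfloor (r_\mathcal{D}(v_i)-b)/t\rfloor$ counts exactly those $j$. Moreover, since the intervals $I_i$ are disjoint and partition $[0, w_\mathcal{D})$, for any fixed $v$ the union of $I_i$ over indices $i$ with $v_i < v$ is exactly $[0, r_\mathcal{D}(v))$. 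Hence
\[
r_\mathcal{T}(v) \;=\; t \cdot \bigl|\mathcal{G}\cap [0, r_\mathcal{D}(v))\bigr|.
\]
This is the main structural observation; everything after this is a one-variable calculation in $b$.

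Next I would evaluate this count explicitly. Writing $R = r_\mathcal{D}(v)$ and using $b\in(0,t)$, the nonnegative integers $j$ with $b+jt < R$ are those with $j < (R-b)/t$, so the count equals $\max\bigl(0,\lceil (R-b)/t\rceil\bigr)$. Decompose $R = kt + r$ with $k\in\mathbb{Z}_{\ge 0}$ and $r\in[0,t)$. For the nontrivial case $r > 0$, the count equals $k+1$ when $b < r$ and $k$ when $b \ge r$, so
\[
\mathsf{E}[r_\mathcal{T}(v)] \;=\; \frac{r}{t}\,(k+1)t + \frac{t-r}{t}\,kt \;=\; kt + r \;=\; R,
\]
and the trivial cases $R=0$ or $r=0$ give $r_\mathcal{T}(v) = R$ almost surely. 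This yields unbiasedness.

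For the error bound, the same case analysis shows that $r_\mathcal{T}(v) - R \in \{-r,\; t-r\}$, both of which lie in $(-t,t)$, giving the deterministic range stated. I do not expect any real obstacle here: the only subtlety is the bookkeeping at the endpoints (e.g.\ the boundary $b=r$ and the clipping of negative counts to zero), which must be checked carefully so that the probability weights add correctly to $1$ and the unbiasedness identity holds exactly rather than up to an endpoint error.
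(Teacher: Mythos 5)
Your proof is correct and takes essentially the same route as the paper: both reduce $r_{\mathcal{T}}(v)$ to the two-point distribution taking values $\lfloor R/t\rfloor\, t$ and $(\lfloor R/t\rfloor+1)\, t$ with probabilities $1-\{R/t\}$ and $\{R/t\}$ (where $R=r_{\mathcal{D}}(v)$), which is exactly your count of shifted grid points in $[0,R)$. If anything, your structural observation that $r_{\mathcal{T}}(v)=t\cdot\vert\mathcal{G}\cap[0,R)\vert$ supplies the justification for this two-point distribution that the paper simply asserts, so your write-up is the more complete one.
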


\begin{proof}
Since the offset $b$ is uniform in the range $(0, t)$, we have
$$
r_{\mathcal{T}}(v) = \sum_{i : v_i < v, (v_i, \tilde{w}_i) \in \mathcal{T}} \tilde{w}_i =
\left\{ \begin{array}{lcl}
\floor*{ \frac{r_{\mathcal{D}}(v)}{t} } \cdot t & \mbox{w. p.} & 1 - \left\{ \frac{r_{\mathcal{D}}(v)}{t} \right\} \\ \left(\floor*{ \frac{r_{\mathcal{D}}(v)}{t} } + 1 \right) \cdot t & \mbox{w. p.} & \left\{ \frac{r_{\mathcal{D}}(v)}{t} \right\}
\end{array}\right.
$$
where $\{ x \} = x - \floor*{x}$ denotes the fractional part of $x$.

Then it is easy to show that $r_{\mathcal{T}}(v)$ is an unbiased estimator for $r_{\mathcal{D}}(v)$,
$$
\mathsf{E} [ r_{\mathcal{T}}(x) ] = \floor*{ \frac{r_{\mathcal{D}}(x)}{t} } \cdot t  +  \left\{ \frac{r_{\mathcal{D}}(x)}{t} \right\} \cdot t = r_{\mathcal{D}}(x)
$$
and the error is bounded in the range $[-t, t]$,
$$
\vert r_{\mathcal{T}}(x) - r_{\mathcal{D}}(x) \vert \le \max \left\{ \left\vert \floor*{ r_{\mathcal{D}}(v) / t } \cdot t - r_{\mathcal{D}}(x) \right\vert, \left\vert \left(\floor*{ r_{\mathcal{D}}(v) / t } + 1 \right) \cdot t - r_{\mathcal{D}}(x) \right\vert \right\} \le t
$$
\end{proof}

\subsection{Distributed Protocols}

\subsubsection{Flat model}

\textbf{The protocol}. The dataset $\mathcal{D} = \mathcal{D}_1 \cup \mathcal{D}_2 \cup \cdots \cup \mathcal{D}_k$ is distributed among $k$ nodes, on each node $j$ we construct a bucketizer $\mathcal{T}_j$ using step size $t = \varepsilon w_\mathcal{D} / \sqrt{k \log (2 / \delta)}$ from its local dataset $\mathcal{D}_j$, and then send it to the central coordinator to get the final summary  $\mathcal{T} = \cup_{j=1}^k \mathcal{T}_j$. The total communication cost is $\vert \mathcal{T} \vert$.

It is easy to see that $r_{\mathcal{T}}(v)$ is an unbiased estimator for $r_{\mathcal{D}}(v)$,
$$
\mathsf{E} [ r_{\mathcal{T}}(v) ] = \mathsf{E} \left[ \sum_{j=1}^k r_{\mathcal{T}_j}(v) \right] = \sum_{j=1}^k r_{\mathcal{D}_j}(v) = r_{\mathcal{D}}(v)
$$
Since the random variables $\{ r_{\mathcal{T}_j}(v) - r_{\mathcal{D}_j}(v) \}, 1 \le j \le k$ are independent and has bounded range $[-t, t]$, we can apply Hoeffding inequality to bound the probability that the error exceeds $\varepsilon w_\mathcal{D}$,
$$
\mathsf{P} ( \vert r_{\mathcal{T}}(v) - r_{\mathcal{D}}(v) \vert > \varepsilon w_\mathcal{D} ) \le 2 \exp \left( -\frac{2 (\varepsilon w_\mathcal{D})^2}{k t^2} \right) \le \delta
$$
Then $\vert \mathcal{T} \vert = \ceil*{w_\mathcal{D}/t} = O (\sqrt{k \log (1 / \delta)} / \varepsilon)$, compared to the summary proposed in \cite{chen2016xgboost} which has $O ( k / \varepsilon )$ communication cost.

\textbf{Reducing individual communication cost}. If the dataset $\mathcal{D}$ is not equally partitioned among these $k$ nodes, some nodes will incur more communication than others, and in the worst case all the communication is incurred from only one node by the above protocol. By varying the step size $t_j$ on each node $j = 1, 2, \cdots, k$, the maximum individual communication cost is $O(\sqrt{\log (1 / \delta)} / \varepsilon)$,
$$
t_j =
\left\{ \begin{array}{ll}
\varepsilon w_\mathcal{D} / \sqrt{k \log (2 / \delta)} & \mbox{if} \quad w_{\mathcal{D}_j} < \frac{ w_{\mathcal{D}} }{ \sqrt{k} } \\
\varepsilon w_{\mathcal{D}_j} / \sqrt{\log (2 / \delta)} & \mbox{otherwise}
\end{array}\right.
$$
The individual communication cost on each node $j$ is
$$\vert \mathcal{T}_j \vert = \ceil*{w_{\mathcal{D}_j} / t_j} = O \left( \sqrt{\log (1 / \delta)} / \varepsilon \right)$$
and the total communication cost is still $O (\sqrt{k \log (1 / \delta)} / \varepsilon)$,
$$\vert \mathcal{T} \vert  =  \sum_{j=1}^{k} \vert \mathcal{T}_j \vert  =  \sqrt{\log \frac{2}{\delta}} \left( \underset{ w_{\mathcal{D}_j} < \frac{ w_{\mathcal{D}} }{\sqrt{k}} }{ \sum }  \sqrt{k} w_{\mathcal{D}_j} / (\varepsilon w_{\mathcal{D}}) + \underset{ w_{\mathcal{D}_j} \ge \frac{ w_{\mathcal{D}} }{\sqrt{k}} }{ \sum } 1 / \varepsilon \right)  =  O \left( \sqrt{k \log \frac{1}{\delta}} / \varepsilon \right)$$

To see that the estimator $r_{\mathcal{T}}(v)$ is an $\varepsilon$-approximation of $r_{\mathcal{D}}(v)$, an elementary calculation shows that,
\begin{eqnarray*}
\sum_{j=1}^k t_j^2 \log \frac{2}{\delta}  & = &  \sum_{ w_{\mathcal{D}_j} < \frac{ w_{\mathcal{D}} }{\sqrt{k}} } \left( \frac{ \varepsilon w_{\mathcal{D}} }{\sqrt{k}} \right)^2 + \sum_{ w_{\mathcal{D}_j} \ge \frac{ w_{\mathcal{D}} }{\sqrt{k}} } \left( \varepsilon w_{\mathcal{D}_j} \right)^2 \\
& \le & k \left( \frac{ \varepsilon w_{\mathcal{D}} }{\sqrt{k}} \right)^2 + \sum_{j=1}^k \left(\varepsilon w_{\mathcal{D}_j} \right)^2 \\
& \le & k \left( \frac{ \varepsilon w_{\mathcal{D}} }{\sqrt{k}} \right)^2 + \left(\sum_{j=1}^k \varepsilon w_{\mathcal{D}_j} \right)^2 \\
& = & 2 (\varepsilon w_{\mathcal{D}})^2
\end{eqnarray*}
Then we have $\mathsf{P} ( \vert r_{\mathcal{T}}(v) - r_{\mathcal{D}}(v) \vert > \varepsilon w_\mathcal{D}) \le 2 \exp(- 2 (\varepsilon w_\mathcal{D})^2 / \sum_{j=1}^k t_j^2) \le \delta$ by Hoeffding inequality.

\subsubsection{Tree Shape All-Reduce Model}

In the tree shape all-reduce protocol, nodes with local data are organized into a binary tree structure. In the reduce phase, the data moves up the tree from leaf nodes and gets aggregated on their parent nodes. In the broadcast phase, the global aggregated result at the root node is passed down to all the other nodes. This all-reduce protocol is used by RABIT, which is the underlying all-reduce library responsible for distributed training in XGBoost \cite{chen2016xgboost}.

\begin{figure}
\centering
\begin{tikzpicture}[<-,>=stealth',level/.style={sibling distance = 5cm/#1,
  level distance = 1.5cm},cus/.style = {shape=circle, draw, align=center}] 
\node [cus] (n1) {3}
    child{ node [cus] (n2) {1} 
            child{ node [cus] (n4) {7} 
            	child{ node [cus] (n5) {5} edge from parent node[left] {5} } 
				child{ node [cus] (n6) {6} edge from parent node[left] {6} }
				edge from parent node[left] {18}
            } 
            child{ node [cus] (n16) {2}
				child{ node [cus] (n7) {7} edge from parent node[left] {7}}
				child{ node [cus] (n8) {8} edge from parent node[left] {8}}
				edge from parent node[left] {17}
            }
            edge from parent node[above left] {36}
    }
    child{ node [cus] (n3) {5}
            child{ node [cus] (n9) {4} 
				child{ node [cus] (n10) {2} edge from parent node[left] {2}}
				child{ node [cus] (n11) {1} edge from parent node[left] {1}}
				edge from parent node[left] {7}
            }
            child{ node [cus] (n12) {3}
				child{ node [cus] (n13) {9} edge from parent node[left] {9}}
				child{ node [cus] (n14) {4} edge from parent node[left] {4}}
				edge from parent node[left] {16}
            }
            edge from parent node[above left] {28}
		}
;
\draw (n1) edge[->, bend left=15] node [right] {67} (n2);
\draw (n1) edge[->, bend left=15] node [right] {67} (n3);

\draw (n2) edge[->, bend left=15] node [below] {67} (n4);
\draw (n2) edge[->, bend left=15] node [right] {67} (n16);
\draw (n4) edge[->, bend left=15] node [below] {67} (n5);
\draw (n4) edge[->, bend left=15] node [right] {67} (n6);
\draw (n16) edge[->, bend left=15] node [below] {67} (n7);
\draw (n16) edge[->, bend left=15] node [right] {67} (n8);

\draw (n3) edge[->, bend left=15] node [below] {67} (n9);
\draw (n3) edge[->, bend left=15] node [right] {67} (n12);
\draw (n9) edge[->, bend left=15] node [below] {67} (n10);
\draw (n9) edge[->, bend left=15] node [right] {67} (n11);
\draw (n12) edge[->, bend left=15] node [below] {67} (n13);
\draw (n12) edge[->, bend left=15] node [right] {67} (n14);
\end{tikzpicture}

\caption{An example of sum operation in tree shape all-reduce protocol, where each node holds a number.} \label{fig:all-reduce}
\end{figure}

\textbf{The protocol}. \textbf{(1)} On a leaf node $j$, we construct a bucketizer $\mathcal{T}_j$ over its local dataset $\mathcal{D}_j$ using step size $t = \varepsilon w_{\mathcal{D}} / \sqrt{ 2 k \log k \log (2 / \delta) }$, and send $\mathcal{T}_j$ to its parent node. \textbf{(2)} On an internal node $j$ at level $h$, let node $j_L$ and $j_R$ denote the children of node $j$, we construct a bucketizer $\mathcal{T}_j$ with step size $t_j = (\sqrt{2})^h t$ over $\tilde{\mathcal{D}}_j = \mathcal{D}_j \cup \mathcal{T}_{j_L} \cup \mathcal{T}_{j_R}$, then send $\mathcal{T}_j$ to the parent of node $j$. \textbf{(3)} The final summary $\mathcal{T}$ at the root is an $\varepsilon$-approximate quantile summary over the entire dataset $\mathcal{D}$.

\begin{lemma} \label{lemma:azuma-hoeffding-ineq}
(Azuma-Hoeffding Inequality) 
Suppose $\{ X_i \}$, $0 \le i \le n$ is a martingale such that $X_0 = 0$ and $\vert X_i - X_{i-1} \vert \le c_i$ for $ 1 \le i \le n$. Then for any $t > 0$,
$$
\mathsf{P}(\vert X_n \vert \ge t) \le 2 \exp \left( -\frac{t^2}{2 \sum_{i=1}^n c_i^2} \right)
$$
\end{lemma}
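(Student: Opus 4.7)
The plan is to prove the Azuma--Hoeffding inequality via the standard Chernoff-style argument lifted to martingales, which reduces everything to a moment generating function (MGF) bound plus a telescoping conditional expectation.

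First I would pick $\lambda > 0$ and apply Markov's inequality to $e^{\lambda X_n}$, giving $\mathsf{P}(X_n \ge t) \le e^{-\lambda t}\, \mathsf{E}[e^{\lambda X_n}]$. The main step is to control $\mathsf{E}[e^{\lambda X_n}]$. Writing $X_n = X_{n-1} + (X_n - X_{n-1})$ and conditioning on the filtration $\mathcal{F}_{n-1}$, the tower property yields
\[
\mathsf{E}[e^{\lambda X_n}] = \mathsf{E}\bigl[e^{\lambda X_{n-1}} \cdot \mathsf{E}[e^{\lambda (X_n - X_{n-1})} \mid \mathcal{F}_{n-1}]\bigr].
\]
Here I use the martingale property $\mathsf{E}[X_n - X_{n-1} \mid \mathcal{F}_{n-1}] = 0$ together with the bounded-difference condition $|X_n - X_{n-1}| \le c_n$.

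The key auxiliary fact I need is Hoeffding's lemma: if $Y$ is a random variable with $\mathsf{E}[Y]=0$ and $|Y|\le c$, then $\mathsf{E}[e^{\lambda Y}] \le e^{\lambda^2 c^2/2}$. I would prove this by writing $Y$ as a convex combination of its extreme values using the convexity of $e^{\lambda y}$ in $y$, bounding the MGF by that of a two-point distribution on $\{-c,c\}$, and then applying a standard Taylor expansion (or the $\log\cosh(x)\le x^2/2$ estimate) to the resulting expression. Applying this conditionally to $Y = X_n - X_{n-1}$ gives $\mathsf{E}[e^{\lambda(X_n - X_{n-1})} \mid \mathcal{F}_{n-1}] \le e^{\lambda^2 c_n^2/2}$, which is a deterministic bound and can be pulled out of the outer expectation.

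Iterating this bound $n$ times yields $\mathsf{E}[e^{\lambda X_n}] \le \exp\bigl(\lambda^2 \sum_{i=1}^n c_i^2 / 2\bigr)$, so that $\mathsf{P}(X_n \ge t) \le \exp\bigl(-\lambda t + \lambda^2 \sum_i c_i^2 / 2\bigr)$. Optimizing over $\lambda$ with $\lambda = t / \sum_i c_i^2$ produces the one-sided tail $\exp\bigl(-t^2/(2\sum_i c_i^2)\bigr)$. To get the two-sided bound, I would repeat the argument for $-X_n$ (which is also a martingale satisfying the same increment bound) and take a union bound, picking up the factor of $2$. The main obstacle is really just Hoeffding's lemma; once that MGF estimate is in hand, the martingale structure makes the rest a mechanical conditioning argument.
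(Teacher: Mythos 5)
Your proof is correct: the Chernoff--Markov step, the conditional application of Hoeffding's lemma to the increment $X_n - X_{n-1}$ (which has conditional mean zero and is bounded by $c_n$), the telescoping MGF bound $\mathsf{E}[e^{\lambda X_n}] \le \exp\bigl(\lambda^2 \sum_i c_i^2/2\bigr)$, the optimization $\lambda = t/\sum_i c_i^2$, and the union bound for the two-sided tail all go through and yield exactly the stated constant. The paper itself states this lemma as a standard tool without proof, so there is nothing to compare against; your argument is the canonical one.
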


\begin{theorem} \label{theorem:tree_quantile}
The above protocol has $O(\sqrt{k \log k \log (1 / \delta)} / \varepsilon )$ expected communication cost, and returns a summary which can answer $\varepsilon$-approximate quantile query with probability at least $1 - \delta$. Moreover, if the dataset $\mathcal{D}$ is equally partitioned among all $k$ nodes, \ie, $w_{\mathcal{D}_1} = w_{\mathcal{D}_2} = \ldots = w_{\mathcal{D}_k} = w_{\mathcal{D}}/k$, then it has $O(\sqrt{\log k \log (1 / \delta)} / \varepsilon)$ individual communication cost.
\end{theorem}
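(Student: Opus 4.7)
The plan is to decompose the root summary's error into a martingale sum of per-node bucketizer errors, apply Lemma~\ref{lemma:azuma-hoeffding-ineq} to calibrate the step size $t$, and then bound the expected summary size level by level. Order the nodes in any topological order consistent with the aggregation (leaves first, root last). For a node $j$ at level $h$ with input $\tilde{\mathcal{D}}_j = \mathcal{D}_j \cup \mathcal{T}_{j_L} \cup \mathcal{T}_{j_R}$ and a fresh uniform offset $b_j$, define $e_j(v) = r_{\mathcal{T}_j}(v) - r_{\tilde{\mathcal{D}}_j}(v)$. By Lemma~\ref{lemma:unbias-bucketizer} applied conditionally on the bucketizers already built, $\mathsf{E}[e_j(v) \mid e_{j'}(v): j' \text{ precedes } j]=0$ and $|e_j(v)| \le t_j = (\sqrt{2})^{h} t$. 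Telescoping the identity $r_{\mathcal{T}_j}(v) = r_{\mathcal{D}_j}(v) + r_{\mathcal{T}_{j_L}}(v) + r_{\mathcal{T}_{j_R}}(v) + e_j(v)$ from the root down to the leaves yields $r_{\mathcal{T}}(v) - r_{\mathcal{D}}(v) = \sum_j e_j(v)$, summed over all tree nodes.

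The sequence $\{e_j(v)\}$ is a martingale difference sequence with $|e_j(v)| \le t_j$. Counting $k/2^h$ nodes at level $h$ gives $\sum_j t_j^2 \le \sum_{h=0}^{\log k} (k/2^h)\cdot 2^h t^2 = O(t^2 k \log k)$. Lemma~\ref{lemma:azuma-hoeffding-ineq} then yields $\mathsf{P}(|r_{\mathcal{T}}(v)-r_{\mathcal{D}}(v)| > \varepsilon w_{\mathcal{D}}) \le 2 \exp(-(\varepsilon w_{\mathcal{D}})^2/O(t^2 k \log k)) \le \delta$ under the prescribed $t = \varepsilon w_{\mathcal{D}} / \sqrt{2 k \log k \log(2/\delta)}$, establishing the $\varepsilon$-approximation guarantee.

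For communication, the bucketizer at node $j$ has size $\lceil w_{\tilde{\mathcal{D}}_j}/t_j \rceil$, and because each bucketizer is weight-preserving in expectation ($\mathsf{E}[r_{\mathcal{T}_i}(\infty)] = r_{\mathcal{D}_i}(\infty)$), $\mathsf{E}[w_{\tilde{\mathcal{D}}_j}]$ equals the aggregate data weight of $j$'s subtree. Summing $\mathsf{E}[w_{\tilde{\mathcal{D}}_j}]/t_j$ over all $k/2^h$ nodes at each level $h$ produces a geometric series in $(1/\sqrt{2})^h$ dominated by the leaf level, giving total expected communication $O(w_{\mathcal{D}}/t) = O(\sqrt{k \log k \log(1/\delta)}/\varepsilon)$. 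If the data is equally partitioned, the expected subtree weight at level $h$ is exactly $2^h w_{\mathcal{D}}/k$, so the expected size of each individual message is $O((\sqrt{2})^h w_{\mathcal{D}}/(kt))$, maximized at the topmost nonroot level $h = \log k - 1$ to give $O(\sqrt{\log k \log(1/\delta)}/\varepsilon)$.

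The step I expect to require the most care is verifying the martingale structure itself: each $b_j$ must be drawn independently so that, conditional on all preceding bucketizers, $e_j(v)$ is zero-mean with a deterministic (not merely expected) range bound $t_j$, which is what allows Lemma~\ref{lemma:azuma-hoeffding-ineq} to be invoked across a tree whose later inputs depend on earlier randomness. A minor secondary check is that the total weight of a bucketizer may differ from its input by up to $t_j$ because of boundary rounding, but this slack only introduces lower-order terms that the geometric summations absorb.
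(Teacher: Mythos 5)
Your proposal is correct and follows essentially the same route as the paper's proof: the same per-node error decomposition $\Delta_j = r_{\mathcal{T}_j}(v) - r_{\tilde{\mathcal{D}}_j}(v)$, the same bottom-up martingale argument with Azuma--Hoeffding and the level-by-level variance sum $\sum_h (k/2^h)\,2^h t^2 = t^2 k\log k$, and the same expectation-preservation argument giving a geometric series for the communication cost. Your closing remarks about the independence of the offsets and the $O(t_j)$ slack in the bucketizer's total weight are sensible care points that the paper itself passes over silently.
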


\begin{proof}
First, we analyze the error of estimating $r_{\mathcal{D}}(v)$ by $r_{\mathcal{T}}(v)$. Let $\Delta_j = r_{\mathcal{T}_j}(v) - r_{\tilde{\mathcal{D}}_j}(v) = r_{\mathcal{T}_{j}}(v) - (r_{\mathcal{D}_j}(v) + r_{\mathcal{T}_{j_L}}(v) + r_{\mathcal{T}_{j_R}}(v))$ be the error introduced at node $j$, where node $j_L$ and $j_R$ are the children of node $j$. Then $r_{\mathcal{T}}(v) - r_{\mathcal{D}}(v) = \sum_{j=1}^k \Delta_j$.

We index the nodes level by level in a bottom up fashion. The probability distribution of $\Delta_j$ depends on the value of $r_{ \mathcal{D}_j \cup \mathcal{T}_{j_L} \cup \mathcal{T}_{j_R}}(v)$, thus depends on $\{ \Delta_1, \Delta_2, \ldots, \Delta_{j-1} \}$, but the conditional expectation $\mathsf{E} [ \Delta_j \mid \Delta_i, i < j]$ is always 0. Then $X_j = \sum_{i = 1}^j \Delta_{i}$ forms a martingale with $X_0 = 0$. Since $\vert \Delta_j \vert \le t_j = (\sqrt{2})^h t$ for each node $j$ at level $h$, we can apply Azuma-Hoeffding inequality to bound the probability that the error of $r_{\mathcal{T}}(v)$ exceeds $\varepsilon w_{\mathcal{D}}$,
\begin{eqnarray*}
\mathsf{P} ( \vert r_{\mathcal{T}}(v) - r_{\mathcal{D}}(v) \vert > \varepsilon w_{\mathcal{D}} )  & \le &  2 \exp \left( -\frac{(\varepsilon w_{\mathcal{D}})^2}{2 \sum_{j = 1}^k t_{j}^2} \right) \\
& = &  2 \exp \left(-\frac{(\varepsilon w_{\mathcal{D}})^2}{2 \sum_{h = 0}^{\log k} \frac{k}{2^h} ((\sqrt{2})^h t)^2} \right) \\
& = &  2 \exp \left( -\frac{(\varepsilon w_{\mathcal{D}})^2}{2 t^2 k \log k} \right) \\
& = & \delta
\end{eqnarray*}
Next, we bound the expected communication cost. By Lemma~\ref{lemma:unbias-bucketizer}, for each node $j$ we have
$$
\mathsf{E} \left[ w_{\mathcal{T}_j} \right] = w_{\tilde{\mathcal{D}}_j}
$$
Then by law of total expectation, let $\mathcal{L}_h$ denote the set of nodes at level $h$, the total communication cost incurred by our protocol is
$$
\mathsf{E} \left[ \sum_{h = 0}^{\log k} \frac{\sum_{j \in \mathcal{L}_h} w_{\tilde{\mathcal{D}}_j} }{ (\sqrt{2})^h t} \right]
\le \sum_{h = 0}^{\log k} \frac{w_{\mathcal{D}}}{ (\sqrt{2})^h t}
= O \left(\frac{\sqrt{k \log k \log (1 / \delta)}}{\varepsilon} \right)
$$
If $w_{\mathcal{D}_1} = w_{\mathcal{D}_2} = \ldots = w_{\mathcal{D}_k} = w_{\mathcal{D}}/k$, the individual communication cost of any node $j$ is
$$
\mathsf{E} \left[ \frac{w_{\tilde{\mathcal{D}}_j}}{t_j} \right] 
= \frac{ 2^h w_{\mathcal{D}}/k }{ (\sqrt{2})^h t } 
= \frac{(\sqrt{2})^h w_{\mathcal{D}}}{k t} 
= O \left( \frac{\sqrt{\log k \log (1 / \delta)}}{\varepsilon} \right)
$$
where $h$ is the level of node $j$ in the binary tree.
\end{proof}
 
\bibliographystyle{unsrt}

\end{document}